\def\BibTeX{{\rm B\kern-.05em{\sc i\kern-.025em b}\kern-.08em
		T\kern-.1667em\lower.7ex\hbox{E}\kern-.125emX}}
\def \c {{\bf c}}
\def \p {{\bf p}}
\def \u {{\bf u}}
\def \x {{\bf x}}
\newtheorem{lemma}{{Lemma}}
\newtheorem{construction}{{Construction}}
\newtheorem{remark}{{Remark}}
\newtheorem{proposition}{{Proposition}}
\def \c {{\bf c}}
\def \p {{\bf p}}
\def \u {{\bf u}}
\def \x {{\bf x}}
\begin{document}
	
	\title{On Designing Novel ISI-Reducing Single Error Correcting Codes in an MCvD System}	
	
	\author{Tamoghno Nath,~\IEEEmembership{Student~Member,~IEEE}, Krishna Gopal Benerjee, ~\IEEEmembership{Member,~IEEE,} and  \\ Adrish Banerjee, ~\IEEEmembership{Senior~Member,~IEEE}
		\thanks{The authors are affiliated with the Department of Electrical Engineering, IIT Kanpur, India (email: \{tamoghno, kgopal, adrish\}@iitk.ac.in).}
	}
	
	
	\maketitle
	
	\begin{abstract}
		Intersymbol Interference (ISI) has a detrimental impact on any Molecular Communication via Diffusion (MCvD) system. Also, the receiver noise can severely degrade the MCvD channel performance. However, the channel codes proposed in the literature for the MCvD system have only addressed one of these two challenges independently. In this paper, we have designed single Error Correcting Codes in an MCvD system with channel memory and noise. We have also provided encoding and decoding algorithms for the proposed codes, which are simple to follow despite having a non-linear code construction. Finally, through simulation results, we show that the proposed single ECCs, for given code parameters, perform better than the existing codes in the literature in combating the effect of ISI in the channel and improving the average Bit Error Rate (BER) performance in a noisy channel.
	\end{abstract}
	\begin{IEEEkeywords}
		Molecular Communication, Intersymbol Interference, Error Correcting Codes
	\end{IEEEkeywords}
	
	\IEEEpeerreviewmaketitle
	
	\section{Introduction}\label{Sec 1: Intro}
	
	\IEEEPARstart{W}{ithin} the realm of Molecular Communication (MC), MC via Diffusion model is one of the promising communication paradigms due to its low energy consumption and simplicity. It also aligns seamlessly with the requirements of numerous bio-engineering applications \cite{8710366}.
	However, in MCvD systems, ISI, combined with environmental unpredictability, reduces both throughput and data reliability.
	To alleviate the effect of ISI in the MCvD channel, different modulation and detection schemes have been explored in the existing literature \cite{9184816,10100961}.
	In addition to these techniques, channel codes play a pivotal role in enhancing network reliability \cite{7273857,7248461,10149469,10356127,6708566,8972472,10250855,9840783,10207011,10041114}.
	However, most of the Error Correcting Codes, which have been integrated into an MCvD system for improved system performance, use linear ECCs, e.g., Hamming, LDPC, and convolutional codes \cite{7273857,7248461}, and do not consider any ISI-reducing code constraint. In \cite{10149469}, authors have employed a molecular shell mapping (MSM) encoder (along with Reed-Solomon code \cite{10356127}) that places high energy cost symbols at the beginning of codewords, effectively mitigating ISI from the previously transmitted codewords.
	
	Non-linear codes, including ISI-free code and ISI-mtg code \cite{6708566,8972472,10250855 }, have been specifically designed to mitigate  ISI, especially in scenarios involving high-memory channels.
	Shih et al. have introduced ISI-free codes to minimize ISI between intra-codewords while maintaining low decoding complexity \cite{6708566}.
	However, these codes demonstrate an improved BER performance when the data rate is low enough to mitigate the effects of ISI\cite{6708566,8972472}. 
	Also, in \cite{8972472}, an ISI-mtg code with a minimum Hamming distance of one has been suggested, where the codewords follow the Zero Padded (ZP) constraints in the code, which was further refined through
	Huffman coding-based approaches in \cite{10250855}.
	Recent studies have highlighted the impact of bit-1 positioning on ISI reduction. For instance, in \cite{9840783, 10207011}, the authors have constructed a codebook based on the weight and consecutive bit-1 constraints, albeit without inherent error correcting capabilities.
	Therefore, the importance of channel codes that effectively mitigate ISI in noisy channels has motivated the development of the proposed ECCs in this paper. 
	Consequently, motivated by \cite{10207011}, this paper focuses on the algebraic construction of an ISI-reducing error correcting ($n,\mathcal{S},d$) code, where the average density of bit-1 at $i$-th position in all codewords is non-increasing for $i=1,2,\ldots,n$.
	Also, an ($n,\mathcal{S},d$) code $\mathcal{C}$ is the set of sequences of length $n$, size $\mathcal{S}$ and minimum Hamming distance $d$. The key contributions of this paper are as follows.
	\begin{enumerate}
		\item For positive integers $k$ and $m$ ($m>k$), we have proposed ISI-reducing and single error correcting binary codes $\mathcal{C}_{k,m}$ with the parameter $(k+m+1,2^k,3)$ along with the encoding and decoding mechanisms.
		For an appropriate choice of $k$ and $m$, the asymptotic code rate of the proposed binary code $\mathcal{C}_{k,m}$ approaches $0.5$. 
		\item We have evaluated the performance of the proposed code  $\mathcal{C}_{k,k+1}$ in the presence of noise and ISI: (i) with and (ii) without ``post-encoding operation". We show that the code $\mathcal{C}_{k,k+1}$ exhibits lower ISI than the existing ECCs and this operation further minimizes ISI by distributing consecutive bit-1s, thus improving the BER performance compared to the best-known results in the literature.
	\end{enumerate}
	
	The paper structures as follows. Section \ref{Sec 3: Model} introduces an MCvD channel considering ISI, followed by constructing a single Error Correcting Code in Section \ref{Sec 4:Code Design}. Subsequently, Section \ref{Sec 5:Encoding Decoding} discusses encoding and decoding methods, while Section \ref{Sec 6: Performance} evaluates the proposed code's performance followed by the conclusion in Section \ref{Sec 7: Conclusion}.
	
	\section{System Model}\label{Sec 3: Model}
	
	This paper considers a 3-dimensional diffusion model in an unbounded environment, where the point transmitter (Tx) sends a binary sequence $\c = c_1c_2\ldots c_n$ of length $n$ through the MCvD channel using On Off Keying modulation \cite{8972472}, where each binary bit-1 is represented by transmitting $M$ number of molecules in each time interval.
	Finally, a fraction of the transmitted molecules, propagated through the channel by pure diffusion, gets absorbed by the receiver (Rx).
	From \cite{6807659}, the probability of absorbing one molecule by Rx (up to time $t$) is given by
	$F(t) = \frac {r}{r_{0}} \text {erfc}\!\left ({\!\frac {r_{0} - r}{\sqrt {4Dt}}\!}\right)$
	for $r_0>r$, where $r$ is the radius of the Rx, $r_0$ is the distance between Tx and the centre of the Rx and $D$ represents the diffusion coefficient of the molecule.
	For any ($n,\mathcal{S},d$) code $\mathcal{C}$ operating in a channel memory of $L$ ($\geq n$), the expected ISI on the $i$-${\text{th}}$ symbol is 
	\begin{align}
		\mathbb{E}\left[\text{ISI}_i(\mathcal{C})\right] = \frac{1}{\mathcal{S}}\text{ISI}_i{(\mathcal{C})} = \frac{1}{\mathcal{S}}\sum_{\textbf{c}\in\mathcal{C}}\text{ISI}_i{(\c)}.
		\label{Eq: ISI sum}
	\end{align}
	Here, the $i$-th bit ISI in the sequence $\c$ is denoted as ISI$_i{(\c)}=\sum_{k=1}^{i-1}c_{k}p_{i-k+1}$, where $p_{i} = F(it_{s}) - F((i-1)t_{s})$ for $i = 1,2,\ldots,L$ with $t_s$ being the sampling time at the Rx.
	
	\section{Error Correction Code Design}\label{Sec 4:Code Design}
	In this section, we construct a ($k+m+1,2^k,3$) single ECC and explore the ISI performance of the codes.
	In this paper, $\mathbf{1}_{p,q}$ ($\mathbf{0}_{p,q}$) represents an all one (all zero) block with $p$ rows and $q$ columns.
	Additionally, $[.]^T$ signifies the transpose of a matrix  $[.]$. 
	Now, if the $r$-${\text{th}}$ codeword $\c(r)$ of an ($n,\mathcal{S},d$) binary code $\mathcal{C}$ is the sequence $\c(r)=c_1(r)c_2(r)\ldots c_n(r)$, then the average density of bit-1 in the $t$-${\text{th}}$ position for the code $\mathcal{C}$ is $\Tilde{d}(t) = \frac{1}{\mathcal{S}}\sum_{r=1}^{\mathcal{S}}c_t(r)$.
	To construct code $\mathcal{C}_{k,m}$, we define two matrices $\mathbf{U}^{(k)}_{2^k,k}$ of $2^k$ rows and $k$ columns, and $\mathbf{P}_{\binom{m}{i},m}^{(m, i)}$ of $\binom{m}{i}$ rows and $m$ columns satisfying following two properties: 
	\begin{enumerate}
		\item \label{property 1} All rows in  $\mathbf{U}^{(k)}_{2^k,k}$ and $\mathbf{P}_{\binom{m}{i},m}^{(m, i)}$ are arranged in decreasing order based on their respective decimal values.
		\item \label{property 2} All rows in $\mathbf{P}_{\binom{m}{i},m}^{(m, i)}$ have the same weight $i$.
	\end{enumerate}
	The matrices $\mathbf{U}^{(k)}_{2^k,k}$ and $\mathbf{P}_{\binom{m}{i},m}^{(m, i)}$ can be obtained using rec- ursive relations specified in Proposition \ref{prop1} and Proposition \ref{prop2}.
	\begin{proposition}\label{prop1}
		For any positive integer $k$, the matrix $\mathbf{U}^{(k)}_{2^k,k}$ can be obtained recursively by
		\begin{align}\label{msg}
			\mathbf{U}^{(r+1)}_{2^{r+1},r+1} =  
			\left[\begin{array}{cc}
				\mathbf{1}_{2^{r}, 1} & \mathbf{U}^{(r)}_{2^r,r}\\
				\mathbf{0}_{2^{r}, 1} & \mathbf{U}^{(r)}_{2^r,r}
			\end{array}\right]
			\mbox{ for }r=1,2,\ldots,k-1
		\end{align}
		with the initial condition $\mathbf{U}_{2,1}^{(1)} = \left[\begin{array}{cc}
			1 & 0 
		\end{array}\right]^T$.
	\end{proposition}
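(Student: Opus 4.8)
\emph{Proof plan.} The first step is to identify precisely which matrix $\mathbf{U}^{(k)}_{2^k,k}$ actually is. Since it has $2^k$ rows, each a binary $k$-tuple, and the construction requires the rows to be arranged in strictly decreasing order of their decimal value, the $2^k$ rows must be pairwise distinct; as there are exactly $2^k$ binary strings of length $k$, the matrix $\mathbf{U}^{(k)}_{2^k,k}$ is forced to be the array listing \emph{all} length-$k$ binary strings in decreasing order of decimal value. I would record this as the working characterization of $\mathbf{U}^{(k)}_{2^k,k}$ and then prove the recursion by induction on $k$.

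The base case $k=1$ is immediate: the two binary $1$-tuples in decreasing order are $1$ then $0$, i.e. $\mathbf{U}^{(1)}_{2,1}=[\,1\ \ 0\,]^{T}$. For the inductive step I would assume $\mathbf{U}^{(r)}_{2^r,r}$ lists all length-$r$ binary strings in decreasing order and then split the length-$(r+1)$ strings by their leading bit. Writing $\mathrm{val}(\cdot)$ for the decimal value, a string $1\mathbf{x}$ with $\mathbf{x}$ of length $r$ satisfies $\mathrm{val}(1\mathbf{x})=2^{r}+\mathrm{val}(\mathbf{x})\in[2^{r},2^{r+1})$, whereas $\mathrm{val}(0\mathbf{x})=\mathrm{val}(\mathbf{x})\in[0,2^{r})$. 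Hence in the decreasing-order enumeration every string beginning with $1$ precedes every string beginning with $0$, so the array decomposes into an upper block with first column $\mathbf{1}_{2^{r},1}$ and a lower block with first column $\mathbf{0}_{2^{r},1}$. Within each block the map $\mathbf{x}\mapsto 1\mathbf{x}$ (resp. $\mathbf{x}\mapsto 0\mathbf{x}$) shifts the decimal value by the constant $2^{r}$ (resp. $0$) and is therefore strictly increasing in $\mathrm{val}(\mathbf{x})$, so ordering the $(r+1)$-tuples decreasingly within a block is equivalent to ordering the suffix $r$-tuples decreasingly. By the inductive hypothesis that suffix array is exactly $\mathbf{U}^{(r)}_{2^r,r}$ in both blocks, which gives
\begin{align*}
\mathbf{U}^{(r+1)}_{2^{r+1},r+1}=\left[\begin{array}{cc}
\mathbf{1}_{2^{r},1} & \mathbf{U}^{(r)}_{2^r,r}\\
\mathbf{0}_{2^{r},1} & \mathbf{U}^{(r)}_{2^r,r}
\end{array}\right],
\end{align*}
completing the induction.

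I expect the only genuinely delicate point to be the first step: arguing that the stated dimensions together with the decreasing-order requirement uniquely determine $\mathbf{U}^{(k)}_{2^k,k}$ as the full sorted enumeration of binary $k$-tuples, since the recursion is only meaningful once that identification is in place. Everything after that is a monotonicity-plus-bookkeeping argument, and I would keep it short. If the paper prefers, the same reasoning can be phrased directly on decimal indices (row $j$ from the top is the binary expansion of $2^{k}-j$), which makes the block split a one-line consequence of integer division by $2^{r}$.
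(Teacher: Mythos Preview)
Your proposal is correct and follows essentially the same approach as the paper: the paper's entire argument is the single sentence ``Using mathematical induction, one can derive that the matrix $\mathbf{U}^{(k)}_{2^k,k}$ obtained in Proposition~\ref{prop1} holds property~\eqref{property 1},'' and your write-up is precisely the induction that fleshes out this claim. The only point you make explicit that the paper leaves implicit is the uniqueness step (that the stated dimensions together with the strict decreasing order force $\mathbf{U}^{(k)}_{2^k,k}$ to be the full sorted list of binary $k$-tuples), which is a welcome clarification rather than a different route.
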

	\begin{proposition}\label{prop2}
		For any positive integers $m$ and $i$, the matrix $\mathbf{P}_{\binom{m}{i},m}^{(m,i)}$ can be constructed recursively as follows
		\begin{align}
			\mathbf{P}_{\binom{m}{i},m}^{(m,i)} =
			\left[
			\begin{array}{cc}
				\mathbf{1}_{\binom{m-1}{i-1},1} & \mathbf{P}_{\binom{m-1}{i-1},m-1}^{(m-1,i-1)} \\
				\mathbf{0}_{\binom{m-1}{i},1} & \mathbf{P}_{\binom{m-1}{i},m-1}^{(m-1,i)}
			\end{array}
			\right] \mbox{ for } m> i\geq 1
		\end{align}
		with the initial conditions $\mathbf{P}_{\binom{r}{0},r}^{(r,0)} = $ 
		$\left[\mathbf{0}_{1,r}\right]$ for $r = 1,2,\ldots,$ $m-i+1$, and $\mathbf{P}_{\binom{r}{r},r}^{(r,r)} = $ 
		$\left[\mathbf{1}_{1,r}\right]$ for $r=1,2,\ldots,i$. 
	\end{proposition}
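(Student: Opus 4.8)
The plan is to prove Proposition~\ref{prop2} by strong induction on $m$, after first making explicit what the matrix $\mathbf{P}_{\binom{m}{i},m}^{(m,i)}$ is supposed to be. It has $\binom{m}{i}$ rows, and by Properties~\ref{property 1}--\ref{property 2} (together with this row count) these rows are exactly the $\binom{m}{i}$ binary words of length $m$ and weight $i$, each listed once, sorted in strictly decreasing order of the integer $\mathrm{val}(\cdot)$ they represent with the most significant bit on the left. Write $W(m,i)$ for this set of words and call the associated matrix its \emph{canonical listing}. In this language, Proposition~\ref{prop2} says two things: the two initial-condition families are the canonical listings of the singleton sets $W(r,0)=\{\mathbf{0}_{1,r}\}$ and $W(r,r)=\{\mathbf{1}_{1,r}\}$, which is immediate; and, whenever $\mathbf{P}_{\binom{m-1}{i-1},m-1}^{(m-1,i-1)}$ and $\mathbf{P}_{\binom{m-1}{i},m-1}^{(m-1,i)}$ are the canonical listings of $W(m-1,i-1)$ and $W(m-1,i)$, the displayed block matrix is the canonical listing of $W(m,i)$.

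For the inductive step, fix $m>i\ge 1$ and split $W(m,i)$ according to the leading (most significant) bit. Writing $(b,u)$ for the length-$m$ word with first bit $b$ and remaining bits $u$, the map $(1,u)\mapsto u$ is a bijection from the leading-$1$ words of $W(m,i)$ onto $W(m-1,i-1)$, and $(0,u)\mapsto u$ is a bijection from the leading-$0$ words onto $W(m-1,i)$; the counts are consistent because $\binom{m}{i}=\binom{m-1}{i-1}+\binom{m-1}{i}$, which also matches the two block heights. Two elementary observations about $\mathrm{val}$ then finish the argument. First, $\mathrm{val}(b,u)=b\,2^{m-1}+\mathrm{val}(u)$ with $0\le\mathrm{val}(u)\le 2^{m-1}-1$, so every leading-$1$ word has value at least $2^{m-1}$ while every leading-$0$ word has value at most $2^{m-1}-1$; hence in decreasing order the entire leading-$1$ block precedes the entire leading-$0$ block, exactly as the two blocks are stacked. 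Second, for a fixed bit $b$ the map $u\mapsto\mathrm{val}(b,u)$ is strictly increasing in $\mathrm{val}(u)$, so prefixing the constant column $\mathbf{1}_{\binom{m-1}{i-1},1}$ (respectively $\mathbf{0}_{\binom{m-1}{i},1}$) on the left of the canonical listing of $W(m-1,i-1)$ (respectively $W(m-1,i)$) yields precisely the decreasing-order listing of the corresponding block. Stacking the two blocks therefore reproduces the canonical listing of $W(m,i)$, and Property~\ref{property 2} is preserved because $(1,w)$ has weight $1+(i-1)=i$ and $(0,w')$ has weight $i$.

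Since this is a straightforward decomposition argument, I do not expect a genuine obstacle; the one place that needs care is checking that the recursion is well-founded and never refers to an undefined matrix. For $m>i\ge 1$ the children $(m-1,i-1)$ and $(m-1,i)$ satisfy $m-1\ge i-1\ge 0$ and $m-1\ge i\ge 0$, so each is again resolved by the recursion or is a base case; tracing the two branches shows that the only base cases reached from a target $(m,i)$ are $\mathbf{P}^{(r,0)}$ for $1\le r\le m-i$ and $\mathbf{P}^{(r,r)}$ for $1\le r\le i$, which is exactly why the stated ranges $r=1,\dots,m-i+1$ and $r=1,\dots,i$ are sufficient. With the base cases established and the inductive step carried out as above, the induction on $m$ goes through and the proof is complete.
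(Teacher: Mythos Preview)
Your proof is correct and follows essentially the same approach the paper indicates: the paper does not give a detailed argument for Proposition~\ref{prop2} but simply remarks that ``using mathematical induction, one can derive'' that the recursively built matrix satisfies Properties~\eqref{property 1} and~\eqref{property 2} and has the stated parameters. Your write-up carries out exactly this induction, making explicit the leading-bit decomposition of the weight-$i$ words and the order-preservation under prefixing, so there is nothing materially different between your route and the paper's intended one.
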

	
	Using mathematical induction, one can derive that the matrix $\mathbf{U}^{(k)}_{2^k,k}$ obtained in Proposition \ref{prop1} holds property \eqref{property 1}, and matrix $\mathbf{P}_{\binom{m}{i},m}^{(m,i)}$ obtained in Proposition \ref{prop2} holds both properties \eqref{property 1} and \eqref{property 2}. 
	Also, parameters for $\mathbf{U}^{(k)}_{2^k,k}$ and $\mathbf{P}_{\binom{m}{i},m}^{(m,i)}$ follow from mathematical induction on $k$, $i$ and $m$. Now, the column weight of the matrix $\mathbf{P}_{\binom{m}{i},m}^{(m,i)}$ is discussed in Remark \ref{number of bits 1 and 0}.
	\begin{remark}
		From property \eqref{property 2} and the fact that all sequence of weight $i$ are the rows of the matrix $\mathbf{P}_{\binom{m}{i},m}^{(m,i)}$, the number of rows with bit-$0$ and bit-$1$ at any given index $j\ (1\leq j\leq m)$ in $\mathbf{P}_{\binom{m}{i},m}^{(m,i)}$ are $\binom{m-1}{i}$ and $\binom{m-1}{i-1}$, respectively.
		\label{number of bits 1 and 0}
	\end{remark}
	Now, Construction \ref{Construction 1} outlines the proposed single ECC.
	\begin{construction}\label{Construction 1}
		For any positive integers $k$ and $m$ ($k<m$), consider two matrices $\mathbf{U}^{(k)}_{2^k,k}$ of $2^k$ rows and $k$ columns, and $\mathbf{P}^{(m)}$ $=[
		\begin{array}{cccc}
			\mathbf{P}_{\binom{m}{0},m}^{(m,0)^T} & \mathbf{P}_{\binom{m}{1},m}^{(m,1)^T} & \ldots & \mathbf{P}_{\binom{m}{\tau},m}^{(m,\tau)^T}  
		\end{array}
		]^T$
		of $\sum_{r=0}^{\tau}\binom{m}{r}$ rows and $m$ columns.
		Then, the code is \[\mathcal{C}_{k,m} = \{\c=[\u(r)~\p(r)~\rho(r)]:r=1,2,\ldots,2^k\},\] where $\u(r)$ is the $r$-${\text{th}}$ row of the matrix $\mathbf{U}^{(k)}_{2^k,k}$, $\p(r)$ is the $r$-${\text{th}}$ row of the matrix $\mathbf{P}^{(m)}$, and 
		\begin{equation}
			\rho(r)=
			\left\{
			\begin{array}{ll}
				0 & \mbox{ if }i\mbox{ is odd in }\mathbf{P}_{\binom{m}{i},m}^{(m,i)}  \\
				1 & \mbox{ if }i\mbox{ is even in }\mathbf{P}_{\binom{m}{i},m}^{(m,i)}.  
			\end{array}
			\right.
			\label{equ c}
		\end{equation}
	\end{construction}	
	The number of rows in $\mathbf{P}^{(m)}$ is greater than the number of rows in the matrix $\mathbf{U}^{(k)}_{2^k,k}$ for $m> k$. If we consider the weight of the last row of $\mathbf{P}^{(m)}$ to be  $\tau$, then we can determine the value of $\tau$ using the following relationship between $k$ and $\tau$, derived from the property of binomial coefficients:
	\begin{align}\label{k_tau_relation}
		\sum_{r=0}^{\tau-1} \binom{m}{r} < 2^k\leq\sum_{r=0}^{\tau} \binom{m}{r} \mbox{ for } m> k.
	\end{align}
	For instance, if $m=k+1$ then one can use \eqref{k_tau_relation} to determine the maximum weight of any row in the matrix $\mathbf{P}^{(k+1)}$ is $\tau = \left\lceil\frac{k}{2}\right\rceil$.
	
	Now, the parameters of $\mathcal{C}_{k,m}$ are given in Lemma \ref{lem1}.
	\begin{lemma}\label{lem1}
		For any positive integers $k$ and $m$, the length, size and minimum Hamming distance of the code $\mathcal{C}_{k,m}$ are $n=k+m+1$, $\mathcal{S}=2^k$ and $d=3$, respectively.
	\end{lemma}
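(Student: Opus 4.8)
The plan is to read off the length and size directly from Construction \ref{Construction 1} and Proposition \ref{prop1}, and to spend the real effort on the claim $d=3$. For the length: a codeword is the concatenation $[\u(r)~\p(r)~\rho(r)]$ of a length-$k$ row of $\mathbf{U}^{(k)}_{2^k,k}$, a length-$m$ row of $\mathbf{P}^{(m)}$, and one bit $\rho(r)$, so $n=k+m+1$. For the size: by \eqref{k_tau_relation} the matrix $\mathbf{P}^{(m)}$ has at least $2^k$ rows, so the construction is well defined, and (as noted after Proposition \ref{prop1}) the $2^k$ rows of $\mathbf{U}^{(k)}_{2^k,k}$ are pairwise distinct---indeed they run through all binary $k$-tuples in strictly decreasing decimal order. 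Hence distinct indices $r$ yield codewords with distinct prefixes $\u(r)$, the assignment $r\mapsto\c(r)$ is injective, and $\mathcal{S}=2^k$.

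For $d\ge 3$, fix $r\ne s$ and split the Hamming distance as $d_H(\c(r),\c(s))=a+b+c$ with $a=d_H(\u(r),\u(s))$, $b=d_H(\p(r),\p(s))$, and $c=d_H(\rho(r),\rho(s))\in\{0,1\}$. Distinctness of the rows of $\mathbf{U}^{(k)}_{2^k,k}$ forces $a\ge1$, and distinctness of the rows of $\mathbf{P}^{(m)}$ (which are exactly the distinct strings of weights $0,1,\ldots,\tau$, by Proposition \ref{prop2} and Remark \ref{number of bits 1 and 0}) forces $b\ge1$. If either $a\ge2$ or $b\ge2$, then $a+b+c\ge3$ at once. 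The only remaining case is $a=b=1$: then $\p(r)$ and $\p(s)$ differ in a single coordinate, so their weights $i_r$ and $i_s$ differ by exactly one, hence exactly one of $i_r,i_s$ is even; by the defining rule \eqref{equ c} for $\rho$ this gives $\rho(r)\ne\rho(s)$, i.e. $c=1$, and again $a+b+c=3$. Thus $d\ge3$.

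For equality I would exhibit the pair $r=1$, $s=2$: the first two rows of $\mathbf{U}^{(k)}_{2^k,k}$ have decimal values $2^k-1$ and $2^k-2$ and so differ only in the last bit ($a=1$); since $2^k\ge2$, the first two rows of $\mathbf{P}^{(m)}$ are $\mathbf{0}_{1,m}$ and a weight-$1$ string, which differ in one coordinate ($b=1$); and the weights $0$ and $1$ have opposite parity, so $\rho(1)\ne\rho(2)$ ($c=1$), giving $d_H(\c(1),\c(2))=3$. The one delicate step is the $a=b=1$ case: this is exactly where the parity coordinate $\rho$ does its job, and it rests on the elementary fact that flipping one bit of a binary vector changes its weight by $\pm1$---precisely the trigger for a toggle in \eqref{equ c}. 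Everything else is bookkeeping built on the structural facts about $\mathbf{U}^{(k)}_{2^k,k}$ and $\mathbf{P}^{(m,i)}$ already established in the excerpt.
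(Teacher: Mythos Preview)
Your proof is correct and follows essentially the same route as the paper's: split $d_H(\c(r),\c(s))$ into the three blocks, use that distinct rows of $\mathbf{U}^{(k)}$ and of $\mathbf{P}^{(m)}$ force the first two contributions to be at least one, and let the parity bit $\rho$ absorb the residual case. The paper organizes the case analysis by whether $\p(r),\p(s)$ lie in the same weight block $\hat{\mathbf{P}}^{i}$ and then by the parity of the weight difference, whereas you cut directly to ``either $a\ge2$ or $b\ge2$, else $a=b=1$''; your version is a little more economical but the underlying mechanism (weights differing by one force a flip in $\rho$ via \eqref{equ c}) is identical. One point in your favor: you actually exhibit a pair at distance exactly $3$, establishing $d=3$ rather than only $d\ge3$, which the paper's proof omits.
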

	\begin{proof}
		The matrices $\mathbf{U}^{(k)}_{2^k,k}$ and $\mathbf{P}^{(m)}$ from Construction \ref{Construction 1} have dimensions $(2^k \times k)$ and $(\sum_{r=0}^{\tau}\binom{m}{r} \times m)$, respectively. The code length follows directly from \eqref{equ c}. By the constraints on $k$ in Construction \ref{Construction 1}, the code size is $\mathcal{S} = \mathrm{min}\{2^k,\sum_{r=0}^{\tau} \binom{m}{r}\} = 2^k$.
		Let $\hat{\mathbf{P}}^{i}=\mathbf{P}_{\binom{m}{i},m}^{(m, i)}$ and $\mathbf{U}^{(k)}=\mathbf{U}^{(k)}_{2^k,k}$.
		Consider $R_{\mathbf{S}}$ as the set of all rows of a matrix $\mathbf{S}$. 
		Any two distinct codewords $\c(r)$ = $[\u(r)~\p(r)~\rho(r)]$ and $\c(s)$ = $[\u(s)~\p(s)~\rho(s)]$ of $\mathcal{C}_{k,m}$ satisfy ($i$) $\u(r),\u(s)\in R_{\mathbf{U}^{(k)}}$, $s.t.$, $\u(r)\neq\u(s)$, ($ii$) $\p(r),\p(s)\in R_{\mathbf{P}^{(m)}}$ $s.t.$ $\p(r)\neq\p(s)$, and ($iii$) $\rho(r)$ and $\rho(s)$ are from \eqref{equ c}. If $\u(r),\u(s)\in R_{\mathbf{U}^{(k)}}$, then $d(\u(r),\u(s))\geq 1$. 
		Now, this leads to two distinct cases. \\
		\textbf{ Case 1} ($\p(r),\p(s)\in R_{\hat{\mathbf{P}}^{i}}$ for some $i$):  
		In this scenario, we have $\sum_{\ell=0}^{i-1} \binom{m}{\ell}<r,s\leq\sum_{\ell=0}^{i} \binom{m}{\ell}\leq2^k$.
		The equal weight of $\p(r)$ and $\p(s)$ leads to $d(\p(r),\p(s))\geq 2$, and, from \eqref{equ c}, $\rho(r)=\rho(s)$. 
		So, the distance $d(\c(r),\c(s))$ = $d(\u(r),\u(s))+d(\p(r),\p(s))+d(\rho(r),\rho(s))\geq3$.\\
		\textbf{ Case 2} (For some distinct $i$ and $j$, $\p(r)\in R_{\hat{\mathbf{P}}^{i}}$ and $\p(s)\in R_{\hat{\mathbf{P}}^{j}}$): 
		In this instance, $r$ holds $\sum_{\ell=0}^{i-1} \binom{m}{\ell}<r\leq\sum_{\ell=0}^{i} \binom{m}{\ell}\leq2^k$, and similarly, $s$ holds $\sum_{\ell=0}^{j-1} \binom{m}{\ell}<s\leq\sum_{\ell=0}^{j} \binom{m}{\ell}\leq2^k$. 
		Also, the weights of $\p(r)$ and $\p(s)$ differ. 
		This leads to ($i$) $d(\p(r),\p(s))\geq2$ and $d(\rho(r),\rho(s))=0$ for even $i-j$, and ($ii$) $d(\p(r),\p(s))\geq1$ and $d(\rho(r),\rho(s))=1$ for odd $i-j$.
		Consequently, $d(\c(r),\c(s))\geq3$ holds for both even and odd cases of $i-j$. 
		This completes the proof. 
	\end{proof}
	\begin{table}[t]
		\begin{center}
			\caption{$(8,8,3)$  code $\mathcal{C}_{3,4}$ with $r=1,2,\ldots,8$.}
			\begin{tabular}{|c|c|p{0.5cm}|c|c|c|p{0.5cm}|c|} \hline
				Index & Message & \multicolumn{2}{c|}{Parity bits} & Index & Message & \multicolumn{2}{c|}{Parity bits} \\ \cline{3-4}\cline{7-8}
				$r$ & bits $\u(r)$  & $\p(r)$ & $\rho(r)$ & $r$ & bits $\u(r)$ &  $\p(r)$ & $\rho(r)$ \\ \hline
				1 & 111 & 0000 & 1 & 5 & 011 & 0001 & 0 \\ \hline   
				2 & 110 & 1000 & 0 & 6 & 010 & 1100 & 1 \\ \hline   
				3 & 101 & 0100 & 0 & 7 & 001 & 1010 & 1 \\ \hline   
				4 & 100 & 0010 & 0 & 8 & 000 & 1001 & 1 \\ \hline   
			\end{tabular}
			\label{mapping}
		\end{center}\vspace{-0.5cm}
	\end{table} 
	An example for the code $\mathcal{C}_{3,4}$ is illustrated in Table \ref{mapping}. In \cite{9840783,10207011}, the authors demonstrate that to minimize the ISI, there is a constraint on the bit-1 locations and the maximum number of consecutive bit-1s for a given sequence of length and weight, resulting in a non-uniformity in the weight distribution.
	So, for any $k$ and $m$, we need to compute the average density of bit-1 at the $t$-th location in the code $\mathcal{C}_{k,m}$ for $t = 1,2,\ldots,k+m+1$ and is given in Lemma \ref{lem2}.
	\begin{lemma}\label{lem2}
		The average density of bit-1 for code $\mathcal{C}_{k,m}$ is
		\begin{enumerate}
			\item $\Tilde{d}(t) = 0.5$ for $1\leq t\leq k$,
			\item $\Tilde{d}(t) \leq \frac{1}{2^k}\sum_{r=1}^{\tau}\binom{m-1}{r-1}$ for $k+1\leq t\leq k+m$, and
			\item $\Tilde{d}(t)\leq\frac{1}{2^k} \sum_{r = 0}^{\left\lfloor\tau/2\right\rfloor}\binom{m-1}{2r}$ for $t = k+m+1$.
		\end{enumerate}
	\end{lemma}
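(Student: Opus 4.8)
The plan is to decompose each codeword $\c(r)=[\u(r)~\p(r)~\rho(r)]$ into its three blocks — the message block $\u(r)$ occupying coordinates $1,\dots,k$, the parity block $\p(r)$ occupying coordinates $k+1,\dots,k+m$, and the check bit $\rho(r)$ in coordinate $k+m+1$ — and to evaluate $\tilde{d}(t)=\frac{1}{2^k}\sum_{r=1}^{2^k}c_t(r)$ block by block. For part (i), I would first prove, by induction using the recursion \eqref{msg} of Proposition \ref{prop1}, that every column of $\mathbf{U}^{(k)}_{2^k,k}$ contains exactly $2^{k-1}$ ones: the base case $k=1$ is immediate, and in the inductive step the new leading column of $\mathbf{U}^{(r+1)}_{2^{r+1},r+1}$ has $2^{r}$ ones while each of the remaining $r$ columns is a column of $\mathbf{U}^{(r)}_{2^r,r}$ written twice and hence has $2\cdot 2^{r-1}=2^{r}$ ones. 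Since $\mathcal{C}_{k,m}$ uses all $2^k$ rows of $\mathbf{U}^{(k)}_{2^k,k}$ in the first $k$ coordinates, $\tilde{d}(t)=2^{k-1}/2^{k}=0.5$ for $1\le t\le k$.

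For part (ii), fix $t=k+j$ with $1\le j\le m$, so that $c_t(r)$ is the $j$-th entry of the $r$-th row of $\mathbf{P}^{(m)}$. Since $\sum_{r=0}^{\tau-1}\binom{m}{r}<2^k\le\sum_{r=0}^{\tau}\binom{m}{r}$ by \eqref{k_tau_relation}, the first $2^k$ rows of $\mathbf{P}^{(m)}$ consist of all rows of the weight-$i$ blocks $\mathbf{P}^{(m,i)}_{\binom{m}{i},m}$ for $0\le i\le\tau-1$ together with the first $N_\tau:=2^k-\sum_{i=0}^{\tau-1}\binom{m}{i}$ rows of the weight-$\tau$ block, where $1\le N_\tau\le\binom{m}{\tau}$. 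By Remark \ref{number of bits 1 and 0}, column $j$ of the complete block $\mathbf{P}^{(m,i)}_{\binom{m}{i},m}$ contains exactly $\binom{m-1}{i-1}$ ones, and in particular the weight-$0$ block contributes none. Adding the contributions of the complete blocks $i=1,\dots,\tau-1$ and bounding the contribution of the partially used block $i=\tau$ by its full value $\binom{m-1}{\tau-1}$ gives $\sum_{r=1}^{2^k}c_t(r)\le\sum_{i=1}^{\tau}\binom{m-1}{i-1}$, and dividing by $2^k$ yields the claim.

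For part (iii), $t=k+m+1$ and $c_t(r)=\rho(r)$, which by \eqref{equ c} equals $1$ precisely when $\p(r)$ has even weight; hence $\sum_{r=1}^{2^k}\rho(r)$ is the number of even-weight rows among the first $2^k$ rows of $\mathbf{P}^{(m)}$. With the same decomposition of these rows into complete blocks $0,\dots,\tau-1$ plus $N_\tau$ rows of block $\tau$, this count equals $\sum_{0\le i\le\tau-1,\,i\text{ even}}\binom{m}{i}$, to which $N_\tau$ must be added when $\tau$ is even. I would then substitute the bound $N_\tau\le\binom{m}{\tau}$ from \eqref{k_tau_relation}, expand $\binom{m}{2r}=\binom{m-1}{2r}+\binom{m-1}{2r-1}$ by Pascal's rule, and simplify — distinguishing the cases $\tau$ even and $\tau$ odd — so as to reduce the tally to the stated bound $\tilde{d}(k+m+1)\le\frac{1}{2^k}\sum_{r=0}^{\lfloor\tau/2\rfloor}\binom{m-1}{2r}$.

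Parts (i) and (ii) are essentially bookkeeping on the recursions of Propositions \ref{prop1} and \ref{prop2} and on the column-weight count of Remark \ref{number of bits 1 and 0}. The main obstacle is part (iii): one has to determine precisely how many rows of the top-weight block $\mathbf{P}^{(m,\tau)}$ enter the code, keep careful track of the parity of $\tau$, and verify that the even-weight tally — after substituting for $N_\tau$ and performing the binomial simplification — is indeed dominated by $\frac{1}{2^k}\sum_{r=0}^{\lfloor\tau/2\rfloor}\binom{m-1}{2r}$; this accounting of the partially filled block, together with getting the final binomial manipulation right, is where the real content of the lemma lies.
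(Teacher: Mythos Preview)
Your treatment of parts (i) and (ii) is essentially the paper's argument, only spelled out more carefully: the paper also appeals to the recursion for $\mathbf{U}^{(k)}$ to get the constant column weight $2^{k-1}$, and uses the column count $\binom{m-1}{i-1}$ from Remark~\ref{number of bits 1 and 0} summed over $i=1,\dots,\tau$ to bound the parity columns.

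For part (iii), however, there is a genuine gap --- not a missing trick, but an obstruction. Your tally of $\rho=1$ codewords is correct: for odd $\tau$ it is exactly
\[
\sum_{\substack{0\le i\le\tau-1\\ i\text{ even}}}\binom{m}{i}
=\sum_{r=0}^{\lfloor\tau/2\rfloor}\binom{m}{2r},
\]
and for even $\tau$ it is $\sum_{r=0}^{\tau/2-1}\binom{m}{2r}+N_\tau$. But the Pascal manipulation you propose will not reduce this to $\sum_{r=0}^{\lfloor\tau/2\rfloor}\binom{m-1}{2r}$. In the odd-$\tau$ case, expanding $\binom{m}{2r}=\binom{m-1}{2r}+\binom{m-1}{2r-1}$ gives
\[
\sum_{r=0}^{\lfloor\tau/2\rfloor}\binom{m}{2r}
=\sum_{r=0}^{\lfloor\tau/2\rfloor}\binom{m-1}{2r}
+\sum_{r=1}^{\lfloor\tau/2\rfloor}\binom{m-1}{2r-1},
\]
so the count \emph{exceeds} the claimed bound by the second sum, which is strictly positive whenever $\tau\ge3$. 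Concretely, for $\mathcal{C}_{5,6}$ one has $\tau=3$; the first $32$ rows of $\mathbf{P}^{(6)}$ comprise all of weights $0,1,2$ and ten weight-$3$ rows, giving $1+15=16$ codewords with $\rho=1$, i.e.\ $\tilde d(12)=16/32=0.5$, whereas the stated bound is $\frac{1}{32}\bigl(\binom{5}{0}+\binom{5}{2}\bigr)=11/32$. So the inequality you set out to prove is false as written, and no amount of binomial simplification will close the gap. The paper's own proof simply asserts the final count without the intermediate algebra; the bound in part (iii) holds with $\binom{m}{2r}$ in place of $\binom{m-1}{2r}$, and that is what your decomposition actually establishes.
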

	\begin{proof}
		If $\c$ = $c_1(r)c_2(r)\ldots c_{k+m+1}(r)$ is the $r$-${\text{th}}$ codeword of the binary code $\mathcal{C}_{k,m}$ then there are three cases: \\
		\textbf{Case 1} ($1\leq t\leq k$): From Construction \ref{Construction 1}, density of bit-1 in the matrix $\mathbf{U}^{(k)}_{2^k,k}$ is always 0.5.    \\
		\textbf{Case 2} ($k+1\leq t\leq k+m$): 
		For a given weight $i$ and length $m$ ($m>i\geq0$), each $r$-${\text{th}}$ row of the matrix $\mathbf{P}_{\binom{m}{i},m}^{(m,i)}$, with the $t$-${\text{th}}$ bit as 1, can have $i-1$ additional bit-1s among the remaining $(m-1)$ bits.
		Consequently, in $\mathbf{P}_{\binom{m}{i},m}^{(m,i)}$, rows of weight $i$ at the $t$-${\text{th}}$ position being bit-1 amount to $\binom{m-1}{i-1}$. 
		The maximum possible weight of the $t$-${\text{th}}$ column in $\mathbf{P}^{(m)}$ is bounded by $\sum_{r=0}^{\tau}\binom{m-1}{r-1}$.
		Thus, the result follows for $k+1\leq t\leq k+m$.\\
		\textbf{Case 3 }($t = k+m+1$): 
		From Construction \ref{Construction 1}, the $(k+m+1)$-${\text{th}}$ bit in the $r$-${\text{th}}$ codeword of $\mathcal{C}_{k,m}$ is $\rho(r)=1$ when $i$ is even. Thus, the codewords with bit-1 at $(k+m+1)$-${\text{th}}$ position are $\sum_{r=0}^{\left\lfloor\tau/2\right\rfloor}\binom{m-1}{2r}$.
		Hence, the proof holds for $t = k+ m+ 1$.
	\end{proof}
	From Lemma \ref{lem2}, we can derive the number of codewords with bit-0s and bit-1s at the $t$-th position in the code $\mathcal{C}_{k,m}$ for $1\leq t \leq k+m+1$ and is discussed in the subsequent remark.
	\begin{remark}
		Based on Remark \ref{number of bits 1 and 0} and Construction \ref{Construction 1}, for $k+1\leq t\leq k+m$, the codewords in $\mathcal{C}_{k,m}$ with bit-$1$s at index $t$ are $\Delta_t(1)\leq\sum_{i=1}^\tau\binom{m-1}{i-1}$, while the codewords in $\mathcal{C}_{k,m}$ with bit-$0$s at index $t$ are $\Delta_t(0)\geq\sum_{i=0}^{\tau-1}\binom{m-1}{i}$. 
		For a given index $t$, the difference between the number of bit-0s and bit-1s is
		$$\Delta_t(0) - \Delta_t(1) \geq \left(\sum_{i=0}^{\tau-1}\binom{m-1}{i}\right) - \left(\sum_{i=1}^{\tau}\binom{m-1}{i-1}\right) = 0.$$
		Thus, $\Delta_t(0)\geq\Delta_t(1)$. Additionally, the total number of codewords at index $t$ satisfies $\Delta_t(0) + \Delta_t(1) = 2^k$, implying that $\Delta_t(1) \leq 2^{k-1}$.
		Also, from Property \eqref{property 2} and Construction \ref{Construction 1}, it holds that $\Delta_t(1)\geq\Delta_{t+1}(1)$ for $t=k+1,k+2,\ldots,k+m-1$.
		Note that here $\Delta_t(1) = 2^k \Tilde{d}(t)$ for $1\leq t \leq k+m+1$.
		\label{coll weight diff}
	\end{remark}
	Therefore, from Lemma \ref{lem2} and Remark \ref{coll weight diff}, the column weight $\Delta_t(1)$ of the code $\mathcal{C}_{k,m}$ for $t=1,2,\ldots,k+m+1$ satisfies
	\begin{enumerate}
		\item  $\Delta_t(1)\leq2^{k-1}$ for $t=1,2,\ldots,k+m+1$,
		\item $\Delta_t(1) = 2^{k-1}$ for $t=1,2,\ldots,k$, and
		\item $\Delta_{t+1}(1) \leq \Delta_t(1)$ for $t=k+1,k+2,\ldots,k+m-1$.
	\end{enumerate}
	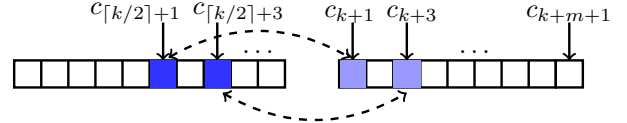
\begin{figure}
		\centering
		\begin{tikzpicture}[line width=0.32mm, scale=0.72] 
			\draw (0,0) rectangle (5,0.5);
			\foreach \x in {0.5,1,1.5,2,2.5,3,3.5,4,4.5} {
				\draw (\x,0) -- (\x,0.5);
			}
			
			\draw (6,0) rectangle (10,0.5);
			\foreach \x in {6.5,7,7.5,8,8.5,9,9.5} {
				\draw (\x,0) -- (\x,0.5);
			}
			
			\draw (10,0) rectangle (10.5,0.5);
			
			\node[above] at (2.25,1) {$c_{\lceil k/2 \rceil+1}$};
			\draw[<-] (2.75,0.5) -- (2.75,1.2);
			
			\node[above] at (4.15,1) {$c_{\lceil k/2 \rceil+3}$};
			\draw[<-] (3.75,0.5) -- (3.75,1.2);
			\node[above] at (4.5,0.45) {$\ldots$};
			
			\node[above] at (6.2,1) {$c_{k+1}$};
			\draw[<-] (6.25,0.5) -- (6.25,1.2);
			
			\node[above] at (7.35,1) {$c_{k+3}$};
			\draw[<-] (7.25,0.5) -- (7.25,1.2);
			
			\node[above] at (8.5,0.45) {$\ldots$};
			
			\node[above] at (10.25,1) {$c_{k+m+1}$};
			\draw[<-] (10.25,0.5) -- (10.25,1.2);
			
			\fill[blue!80] (2.51,0) rectangle (2.99,0.5);
			\fill[blue!80] (3.52,0) rectangle (3.99,0.5);
			\fill[blue!40] (6.01,0) rectangle (6.49,0.5);
			\fill[blue!40] (7,0) rectangle (7.5,0.5);
			
			\draw[dashed,<->] (2.85,0.55) to[out=30, in=150] (6.2,0.55);
			\draw[dashed,<->] (3.85,-0.1) to[out=-30, in=-150] (7.25,-0.1);
			
		\end{tikzpicture}
		\caption{Post-encoding operation of the code $\mathcal{C}_{k,m}$ for $m> k$.}
		\label{fig_post_encoding}
	\end{figure}
	For $t=1,3,\ldots,2\lceil\lfloor k/2\rfloor/2\rceil-1$, from \eqref{Eq: ISI sum}, the ISI sum for the code $\mathcal{C}_{k,m}$ at $\left\lceil{k}/{2}\right\rceil+t+1$ and at $k+t+1$ are respectively
	\begin{align}\label{eq_isi_sum_before_swap}
		&	\text{ISI}_{\left\lceil{k}/{2}\right\rceil+t+1}(\mathcal{C}_{k,m}) = 2^{k-1}\sum_{l=2}^{\left\lceil{k}/{2}\right\rceil+t+1}p_{l}, \mbox{ and}\notag\\
		&		\text{ISI}_{k+t+1}(\mathcal{C}_{k,m}) = 2^{k-1}\sum_{l=t+2}^{k+t+1}p_{l}+\sum_{l=2}^{t+1}\Delta_{k+t-l+2}(1)p_{l}.
	\end{align} 
	If we inter-change $\left(\left\lceil{k}/{2}\right\rceil+t\right)$-th and $\left(k+t\right)$-th columns to obtain an equivalent code $\mathcal{C}_{k,m}'$, then from \eqref{Eq: ISI sum}, the ISI sum for the code $\mathcal{C}_{k,m}'$ at $\left\lceil{k}/{2}\right\rceil+t+1$ and $k+t+1$ are respectively
	\begin{align}\label{eq_isi_sum_after_swap}
		&	\text{ISI}_{\left\lceil{k}/{2}\right\rceil+t+1}(\mathcal{C}_{k,m}') = \Delta_{k+t}(1)p_2+2^{k-1}\sum_{l=3}^{\left\lceil{k}/{2}\right\rceil+t+1}p_{l}, \mbox{ and}\notag\\
		& \text{ISI}_{k+t+1}(\mathcal{C}_{k,m}') = \Delta_{k+t}(1)p_{\left\lfloor{k}/{2}\right\rfloor+2}+ 2^{k-1}\sum_{l=\left\lfloor{k}/{2}\right\rfloor+3}^{k+t+1}p_{l}+\notag\\
		&\hspace{1cm}2^{k-1}p_2+ 2^{k-1}\sum_{l=t+2}^{\left\lfloor{k}/{2}\right\rfloor+1}p_{l}+\sum_{l=3}^{t+1}\Delta_{k+t-l+2}(1)p_{l}.
	\end{align} 
	Hence, from \eqref{eq_isi_sum_before_swap} and \eqref{eq_isi_sum_after_swap}, the net gain on ISI for codes $\mathcal{C}_{k,m}'$ and $\mathcal{C}_{k,m}$ is 
	$\left(\mbox{ ISI}_{\left\lceil{k}/{2}\right\rceil+t+1}(\mathcal{C}_{k,m}')+\mbox{ ISI}_{k+t+1}(\mathcal{C}_{k,m}')\right)-\left(\mbox{ ISI}_{\left\lceil{k}/{2}\right\rceil+t+1}(\mathcal{C}_{k,m})+\mbox{ ISI}_{k+t+1}(\mathcal{C}_{k,m})\right)$ = $-(2^{k-1}-\Delta_{\left\lfloor{k}/{2}\right\rfloor+2})\leq0$.
	In fact, for certain values of $t\in\{1,2,\ldots,2\left\lceil \left\lfloor k/2\right\rfloor/2\right\rceil-1\}$, the net gain on ISI is strictly negative.
	This motivates to introduce a  ``post-encoding operation" in Remark \ref{post_encoding}, which distributes consecutive bit-1s across the codeword to further minimize the ISI (Figure \ref{fig_post_encoding}).
	\begin{remark}[Post-Encoding Operation]\label{post_encoding}
		For any positive integers $k$ and $m~(m>k)$, consider any codeword $\c$ = $c_1c_2\ldots c_{k+m+1}$ in the code $\mathcal{C}_{k,m}$. If the binary sequence $\Tilde{\c}=\Tilde{c}_1$ $\Tilde{c}_2\ldots\Tilde{c}_{k+m+1}$ is transmitted over the channel then
		\begin{enumerate}
			\item $\Tilde{c}_{\lceil k/2\rceil + t} = c_{k + t}$ for $t = 1,3,\ldots,2\lceil \lfloor k/2 \rfloor/2\rceil - 1$
			\item $\Tilde{c}_{k + t} = c_ {\lceil k/2\rceil + t}$ for $t = 1,3,\ldots,2\lceil \lfloor k/2 \rfloor/2\rceil - 1$
			\item $\Tilde{c}_t = c_{t}$ for the remaining cases.
		\end{enumerate}
	\end{remark}
	
	For instance, applying the post-encoding operation to the $5$-th codeword in Table \ref{mapping} yields $\Tilde{c}(5) = 01010010$, where the $3$-rd and $4$-th bits of $\mathbf{c}(5)$ are swapped.
	Remark \ref{Remark 3} and Remark \ref{Remark 4} discuss the ISI bounds of code $\mathcal{C}_{k,m}$ for a given code rate.
	\begin{remark}
		From Construction \ref{Construction 1}, for any given rational $\epsilon$ ($0<\epsilon<0.5$), if there exist positive integers $k$ and $\tau$ ($\leq k$) such that ($i$) ${k}/{\epsilon}$ is an integer, ($ii$) $\tau+k+1<{k}/{\epsilon}$, and ($iii$) $\sum_{j=0}^{\tau-1} \binom{\frac{k}{\epsilon}-k-1}{j}<2^k\leq\sum_{j=0}^{\tau} \binom{\frac{k}{\epsilon}-k-1}{j}$, then one can construct the single ECC $\mathcal{C}_{k,k\left({1}/{\epsilon}-1\right)-1}$ with the code rate $\epsilon$ and $\mathrm{ISI}_i(\c)\leq \mathrm{ISI}_{k/\epsilon}([\mathbf{0}_{1,{k}/{\epsilon}-\tau-1}~\mathbf{1}_{1,\tau}~0])$. 
		\label{Remark 3}
	\end{remark} 
	For example, for $\epsilon=1/5$, there exist binary codes $\mathcal{C}_{6,23}$ and $\mathcal{C}_{7,27}$ with $\tau=2$, $s.t.,$ Remark \ref{Remark 3} holds. 
	Also, the expected ISI for the code $\mathcal{C}_{7,27}$ is less than that of $\mathcal{C}_{6,23}$, where the code rates of both codes are the same and $\mathrm{ISI}_i(\c)\leq \mathrm{ISI}_{30}([\mathbf{0}_{1,27}~\mathbf{1}_{1,2}~0])$ for each codeword $\c$ of $\mathcal{C}_{6,23}$ and of $\mathcal{C}_{7,27}$, and $i=1,2,\ldots,30$.
	
	\begin{remark}\label{Remark 4}
		Using Lemma \ref{lem2}, for any given rational $\epsilon$ ($0<\epsilon<0.5$) and given positive integer $\tau$, if there exist $s$ distinct integers $k_1,k_2,\ldots, k_s$, $s.t,$ $k_1<k_2<\ldots<k_s$ and it satisfies all the properties as given in Remark \ref{Remark 3} then the expected ISI for the code $\mathcal{C}_{k_j,k_j(\frac{1}{\epsilon}-1)-1}$ is less than the expected ISI of $\mathcal{C}_{k_i,k_i(\frac{1}{\epsilon}-1)-1}$ for $1\leq i<j\leq s$. Also, the code rates of all the codes $\mathcal{C}_{k_j,k_j(\frac{1}{\epsilon}-1)-1}$ are the same and equal to $\epsilon$.
	\end{remark}	
	From Lemma \ref{lem1}, for any code $\mathcal{C}_{k,m}$, $s.t.,$ $m>k$, the asy- mptotic code rate $\lim\limits_{n\to\infty}\frac{k}{n}$ approaches to $\epsilon< 0.5$. For the par- ticular code $\mathcal{C}_{k,k+1}$,  the asymptotic code rate approaches $0.5$. 

	\section{Encoding and Decoding}\label{Sec 5:Encoding Decoding}
	This section discusses the encoding and decoding of $(k+m+1,2^k,3)$ code $\mathcal{C}_{k,m}$, where the first $k$ positions represent the message bits in encoded bits, and the last $m+1$ positions denote the parity-check bits.
	Any $k$-length message can be encoded using Remark \ref{post_encoding} and Algorithm \ref{Encoding_Algo}. 
	The complexity to obtain the parity-check bits depends on the search operation (step 4 in Algorithm \ref{Encoding_Algo}), which increases with the length and maximum weight ($\tau$) of the codeword.
	However, smaller code parameters, as desired in molecular circuit implementations, keep this complexity low.
	For instance, in the code $\mathcal{C}_{4,5}$, a maximum of \( \binom{5}{\tau} = 10 \) checks are required to compute the parity bits for a given message.
	
	\begin{algorithm}[t]
		\caption{Encoding algorithm for code $\mathcal{C}_{k,m}$}  
		\begin{algorithmic}[1] 
			\REQUIRE Encoded message $\Tilde{c}$
			\ENSURE Message sequence $\u$ = $u_1u_2\ldots u_k$ of length $k$, \\ and length of parity-bit sequence $m+1$
			\STATE msg$_{\mathrm{dec}}$ = $2^k-\sum_{j=1}^{k}u_j2^{k-j}$. 
			\STATE If msg$_{\mathrm{dec}}$ = $0$ then $i=0$ and $\p$ = $\textbf{0}_{1,m}$.
			\STATE If msg$_{\mathrm{dec}}\neq0$ then compute integers $i$ and $r$ by \\ $\sum_{j=0}^{i-1} \binom{m}{j}<\mbox{msg}_{\mathrm{dec}}\leq\sum_{j=0}^{i} \binom{m}{j}$ and \\  $r$ = msg$_{\mathrm{dec}}-\sum_{j=0}^{i-1} \binom{m}{j}$, respectively.
			\STATE $\p$ is the $r$-th row of $\mathbf{P}_{\binom{m}{i},m}^{(m,i)}$ (from Proposition \ref{prop2}).
			\STATE If $i$ is odd then $\rho = 0$, else $\rho = 1$ \COMMENT{($k+m+1$)-th bit}.
			\STATE Return $\c$ =  [$\u~\p~\rho$].
			\STATE Perform post-encoding operation on $\c$ and get $\Tilde{\c}$.
		\end{algorithmic}
		\label{Encoding_Algo} 
	\end{algorithm}	
	
	Upon receiving the post-encoded sequence at the Rx, it first undergoes a ``pre-decoding operation," which is the reverse of the post-encoding operation. Subsequently, the pre-decoded sequence is decoded by Algorithm \ref{Decoding_Algo}, recovering the original message under the following scenarios: ($i$) no errors in the entire $k+m+1$ bits, ($ii$) a single error in the message sequence, allowing recovery from the parity sequence using Algorithm \ref{Decoding_Algo}, or ($iii$) a single error in the parity sequence, resulting in the original decoded message sequence being the first $k$ bits.
	\noindent\textbf{Application:} We can derive the individual expressions for the parity bits by the K-map. For example, for the code $\mathcal{C}_{3,4}$, the output logic expression is $c_4 = \Bar{c}_1\Bar{c}_2+{c}_2\Bar{c}_3, c_5 = {c}_1\Bar{c}_{2}c_{3}+\Bar{c}_1c_{2}\Bar{c}_3, c_6 = \Bar{c}_1\Bar{c}_2c_{3}+c_{1}\Bar{c}_{2}\Bar{c}_3,c_7 = \Bar{c}_1\Bar{c}_2\Bar{c}_3+\Bar{c}_1c_2{c}_3$ from Table \ref{mapping}, and the last parity bit ($c_8$) can be obtained from \eqref{equ c}. 
	One can validate the feasibility of constructing multi-input protein-based logic gates for applications in biological cells \cite{chen2020novo}.
	\begin{algorithm}[t]
		\caption{Decoding algorithm for code $\mathcal{C}_{k,m}$} 
		\begin{algorithmic}[1] 
			\REQUIRE Decoded message $\hat{\textbf{u}}$
			\ENSURE Length of message sequence $k$, and the received sequence $\Tilde{\textbf{v}}$ = $\Tilde{v}_1\Tilde{v}_2\ldots \Tilde{v}_{k+m+1}$ of length $k + m +1$
			\STATE Obtain sequence $\textbf{v}=v_1v_2\ldots v_{k+m+1}$ from $\Tilde{\textbf{v}}$ using ``pre-decoding operation".
			\STATE Extract the message bit sequence $\textbf{u}$ = $v_1v_2\ldots v_{k}$, and the parity bit sequence $[\textbf{p}~\rho] = v_{k+1}v_{k+2}\ldots v_{k+m+1}$.
			\STATE Compute $i = \sum_{j = k+1}^{k+m} v_j$ \COMMENT{weight of $\textbf{p}$}.
			\STATE Based on \eqref{equ c}, obtain $\Hat{\rho}$ from the computed $i$.
			\STATE If $v_{k+m+1}= \Hat{\rho}$ then compute row index $r'$ of the sequence $\textbf{p}$ in $\mathbf{P}_{\binom{m}{i},m}^{(m,i)}$ \COMMENT{$i.e.,$ no error in parity bits}.
			\STATE Get $\mathbf{u}(q)$, i.e., $q\ (=r' + \sum_{j=0}^{i-1}\binom{m}{j})$-th row of $\mathbf{U}^{(k)}_{2^k,k}$.
			\STATE Therefore, the decoded message is $\Hat{\textbf{u}}=\mathbf{u}(q)$.
			\STATE If $v_{k+m+1} \neq \Hat{\rho}$ then $\Hat{\u} = \u$ \COMMENT{$i.e.,$ error in parity bits}.
		\end{algorithmic} 
		\label{Decoding_Algo}
	\end{algorithm}
	\begin{figure*}[h]
		\begin{minipage}[t]{.32\linewidth}
			\centering
			\includegraphics[width=5.5cm]{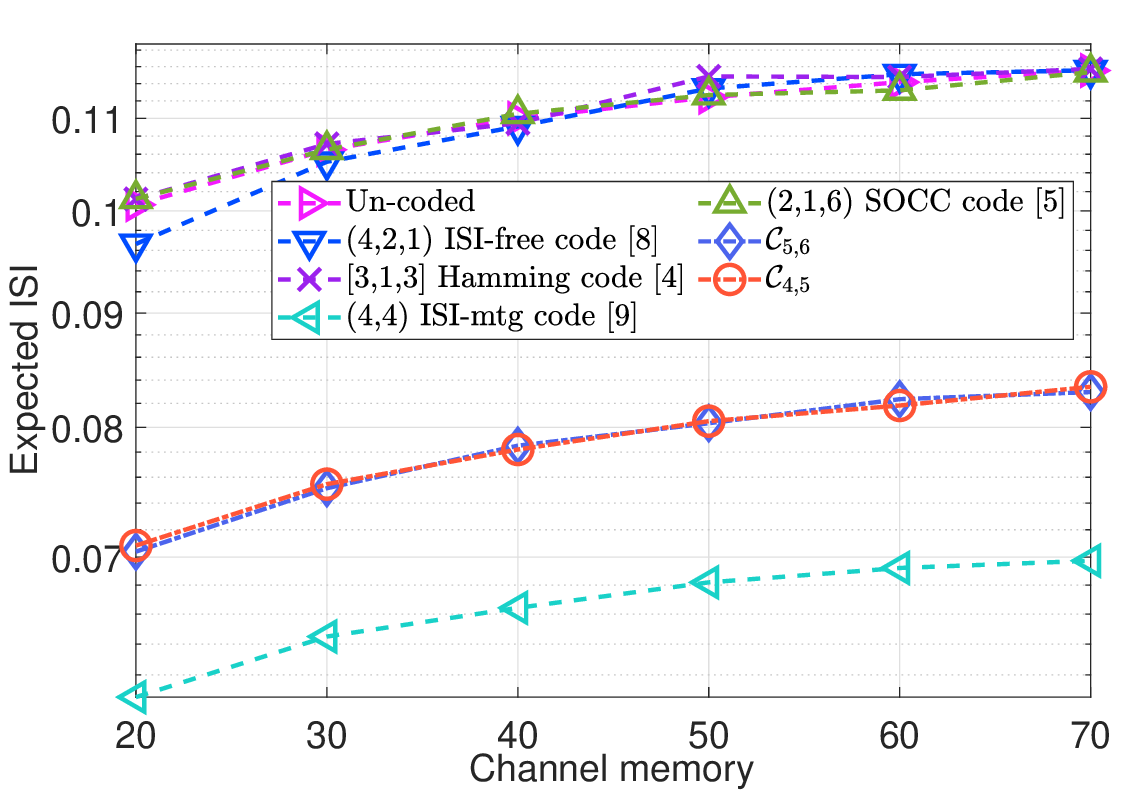}
			\caption{Expected ISI comparison at $t_s = 0.3$s.}
			\label{fig_isi}
		\end{minipage}%
		\begin{minipage}[t]{.33\linewidth}
			\centering
			\includegraphics[width=5.5cm]{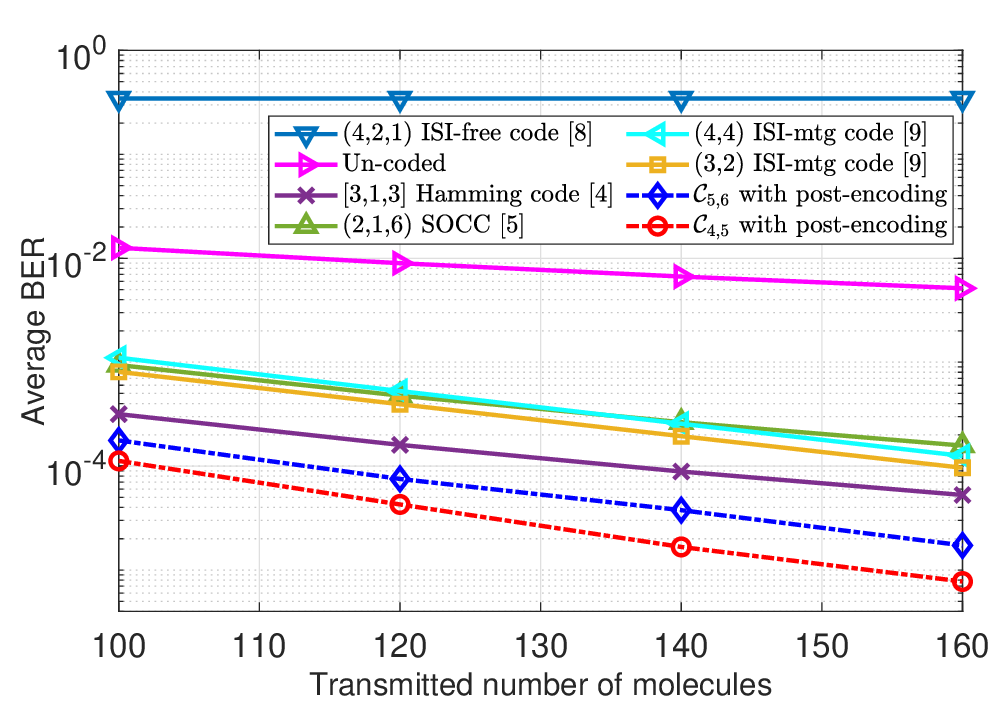}%
			\caption{Average BER with $\sigma_n^2=0$ and $t_s = 0.4$s.}%
			\label{fig_mol}
		\end{minipage}
		\begin{minipage}[t]{.35\linewidth}
			\centering
			\includegraphics[width=5.6cm]{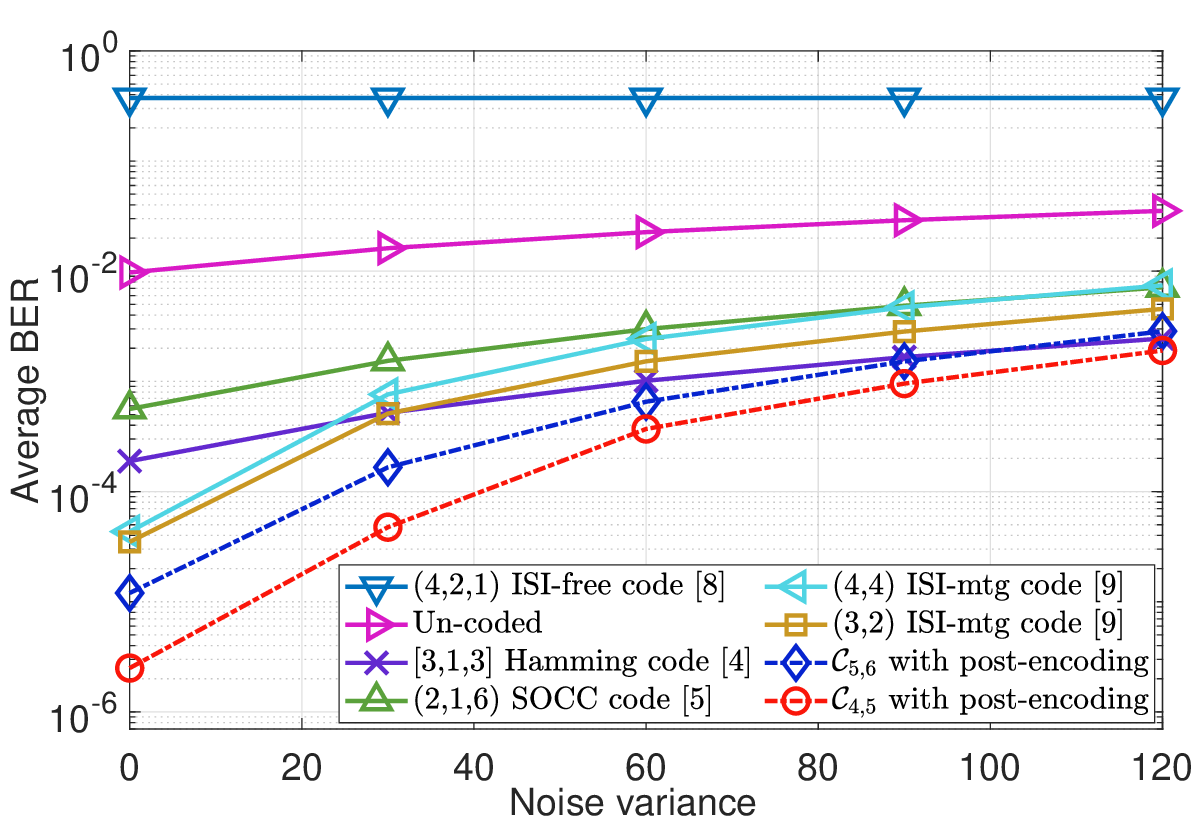}%
			\caption{Average BER with $M = 300$ and $t_s = 0.3$s.}%
			\label{fig_sigma}
		\end{minipage}
	\end{figure*}

	\section{Performance Evaluation}\label{Sec 6: Performance} 
	\begin{figure}[htbp]
		\centering
		\includegraphics[width=7.2cm]{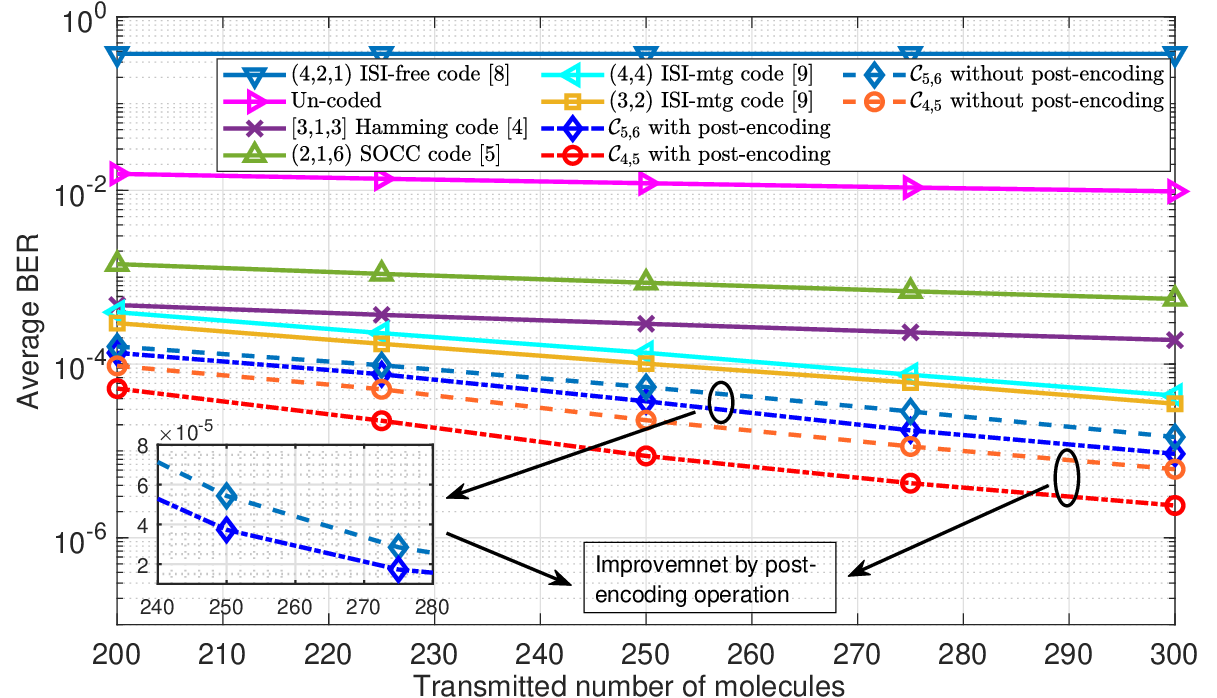}
		\caption{Average BER comparison with $\sigma_n^2 = 0$ and $t_s = 0.3$s.}
		\label{fig_ber_compare_ts_3}
	\end{figure}
	In this section, we evaluate the performance of the proposed ECCs in an MCvD channel with memory. Comparative evaluations include expected ISI and average BER analysis for various linear and non-linear channel codes, such as $[3,1,3]$ cyclic Hamming code \cite{7273857}, $(2,1,6)$ SOCC with six input memory blocks \cite{7248461}, $(4,2,1)$ ISI-free code \cite{6708566}, and $(3,2)$ and $(4,4)$ ISI-mtg codes \cite{8972472}. We compare the BER performance of our proposed $(10,16,3)$ code $\mathcal{C}_{4,5}$ and $(12,32,3)$ code $\mathcal{C}_{5,6}$ against codes with comparable code-rates in the literature.
	Simulation parameters include the values from \cite{8972472}, such that:
	
	(i) $D = 79.4\mu$m$^2$/s, (ii) $r = 5\mu$m, (iii) $r_{0} = 10\mu$m, (iv) $t_s$ = \{0.3s, 0.4s\}, (v) transmitted number of molecules $100 \leq M \leq 300$ and (vi) an additive Gaussian noise with mean $0$ and variance $\sigma_n^{2} ~(\leq 120)$. \\
	\noindent\textbf{Expected ISI Comparison:} In Figure \ref{fig_isi}, we compare the effect of expected ISI on the last bit by transmitting a total of $10^5$ messages over the channel. It is observed that the expected ISI for both the code $\mathcal{C}_{4,5}$ and $\mathcal{C}_{5,6}$ is less than the $[3,1,3]$ Hamming code  \cite{7273857} and $(2,1,6)$ SOCC single ECC \cite{7248461}. 
	Also note that due to a similar bit-1 average density, the ISI performance of $\mathcal{C}_{5,6}$ is nearly identical to that of $\mathcal{C}_{4,5}$.
	In comparison, the $(4,4)$ ISI-mtg code, with a minimum Hamming distance of one, achieves the lowest expected ISI among the considered channel codes for its ZP constraints.  	\\
	\noindent\textbf{BER Performance vs Transmitter:} Figure \ref{fig_mol} and  Figure \ref{fig_ber_compare_ts_3} depict that the code $\mathcal{C}_{4,5}$ and $\mathcal{C}_{5,6}$ at both $t_s = 0.4$s and $0.3$s with $L = 40$, perform better than the existing channel codes in the literature.  
	From \cite{6708566}, the crossover probability in an ISI-free code is proportional to the probability density function of the first hitting time in an MCvD channel without drift. Since, this crossover probability reaches its maximum for $t_s\in [0.2$s, $0.4$s$]$, this results in higher ISI and thus inferior BER performance compared to the uncoded case in Figure \ref{fig_mol} and  Figure \ref{fig_ber_compare_ts_3}.
	From these figures, both codes $\mathcal{C}_{4,5}$ and $\mathcal{C}_{5,6}$, with a higher code rate (code rate of 0.4 and 0.4167, respectively),  achieve a BER improvement over the Hamming and ISI-mitigating codes (code rate of 0.3333).
	While, in Figure \ref{fig_mol}, [3,1,3] Hamming code with a larger sampling time ($t_s = 0.4$s), experiences a lesser ISI than at $t_s = 0.3$s, thereby leading to a better BER performance than the ISI-mtg codes. 
	
	Additionally, from Figure \ref{fig_ber_compare_ts_3}, it is evident that without the post-encoding operation, code $\mathcal{C}_{4,5}$ has a BER of $1.125\times 10^{-5}$, while with the operation, it achieves a lower BER of $4.25\times 10^{-6}$ with $M = 275$, confirming the effectiveness of this operation.
	The proposed post-encoding operation constrains the maximum density of bit-1 before a bit-0, leading to a non-linear code construction to mitigate ISI. 
	However, this operation becomes ineffective as the binary linear code, considered in this paper, has an average density of 0.5 for all the bits. 	\\
	\noindent\textbf{BER Performance vs Noise:} As shown in Figure \ref{fig_sigma}, both the proposed codes, $\mathcal{C}_{4,5}$ and $\mathcal{C}_{5,6}$, surpass $[3,1,3]$ Hamming code \cite{7273857}, $(2,1,6)$ SOCC \cite{7248461}, $(4,2,1)$ ISI-free code \cite{6708566}, and $(3,2)$ and $(4,4)$ ISI-mtg codes \cite{8972472} in the channel where the noise variance $(\sigma_n^2)\leq 60$. Notably, the $\mathcal{C}_{4,5}$ code is the best-performing ECC within the considered range of noise ($\sigma_n^2\leq 120$) in an MCvD channel with memory ($L = 40$).
	
	\section{Conclusion}\label{Sec 7: Conclusion}
	This paper introduces a single Error Correcting Code to mitigate ISI in an MCvD channel.
	Comparisons with existing channel codes highlight BER improvements in the presence of noise and ISI, particularly for the code $\mathcal{C}_{k,k+1}$, which exhibit lower ISI levels than conventional linear ECCs.
	Additionally, we introduce the encoding and decoding algorithms for this ECC. The encoded bits undergo a ``post-encoding operation,'' thus reducing consecutive bit-1s in the transmitted sequence and enhancing system reliability.
	

\begin{thebibliography}{10}
	\providecommand{\url}[1]{#1}
	\csname url@samestyle\endcsname
	\providecommand{\newblock}{\relax}
	\providecommand{\bibinfo}[2]{#2}
	\providecommand{\BIBentrySTDinterwordspacing}{\spaceskip=0pt\relax}
	\providecommand{\BIBentryALTinterwordstretchfactor}{4}
	\providecommand{\BIBentryALTinterwordspacing}{\spaceskip=\fontdimen2\font plus
		\BIBentryALTinterwordstretchfactor\fontdimen3\font minus
		\fontdimen4\font\relax}
	\providecommand{\BIBforeignlanguage}[2]{{%
			\expandafter\ifx\csname l@#1\endcsname\relax
			\typeout{** WARNING: IEEEtran.bst: No hyphenation pattern has been}%
			\typeout{** loaded for the language `#1'. Using the pattern for}%
			\typeout{** the default language instead.}%
			\else
			\language=\csname l@#1\endcsname
			\fi
			#2}}
	\providecommand{\BIBdecl}{\relax}
	\BIBdecl
	
	\bibitem{8710366}
	I.~F. Akyildiz, M.~Pierobon, and S.~Balasubramaniam, ``Moving forward with
	molecular communication: {From} theory to human health applications [point of
	view],'' \emph{Proc. IEEE}, vol. 107, no.~5, pp. 858--865, May 2019.
	
	\bibitem{9184816}
	X.~Chen, Y.~Huang, L.-L. Yang, and M.~Wen, ``Generalized molecular-shift keying
	({GMoSK}): Principles and performance analysis,'' \emph{IEEE Trans. Mol.
		Biol. Multi-Scale Commun.}, vol.~6, no.~3, pp. 168--183, Dec. 2020.
	
	\bibitem{10100961}
	M.~Saeed, M.~Maleki, and H.~R. Bahrami, ``A novel inter-symbol interference
	model and weighted sum detection for diffusion-based molecular communication
	systems,'' \emph{IEEE Trans. Mol. Biol. Multi-Scale Commun.}, vol.~9, no.~2,
	pp. 167--178, Jun. 2023.
	
	\bibitem{7273857}
	Y.~Lu, M.~D. Higgins, and M.~S. Leeson, ``Comparison of channel coding schemes
	for molecular communications systems,'' \emph{IEEE Trans. Commun.}, vol.~63,
	no.~11, pp. 3991--4001, Nov. 2015.
	
	\bibitem{7248461}
	------, ``Self-orthogonal convolutional codes ({SOCC}s) for diffusion-based
	molecular communication systems,'' in \emph{Proc. IEEE Int. Conf. Commun.
		(ICC)}, Jun. 2015, pp. 1049--1053.
	
	\bibitem{10149469}
	Y.~Tang, F.~Ji, M.~Wen, Q.~Wang, C.-B. Chae, and L.-L. Yang, ``Probabilistic
	constellation shaping for molecular communications,'' \emph{IEEE Trans.
		Commun.}, vol.~71, no.~9, pp. 5124--5136, Sept. 2023.
	
	\bibitem{10356127}
	Y.~Tang, F.~Ji, Q.~Wang, M.~Wen, C.-B. Chae, and L.-L. Yang, ``{Reed-{S}olomon}
	coded probabilistic constellation shaping for molecular communications,''
	\emph{IEEE Commun. Lett.}, vol.~28, no.~2, pp. 258--262, Feb. 2024.
	
	\bibitem{6708566}
	P.~Shih, C.~Lee, P.~Yeh, and K.~Chen, ``Channel codes for reliability
	enhancement in molecular communication,'' \emph{IEEE J. Sel. Areas Commun.},
	vol.~31, no.~12, pp. 857--867, Dec. 2013.
	
	\bibitem{8972472}
	A.~O. Kislal, B.~C. Akdeniz, C.~Lee, A.~E. Pusane, T.~Tugcu, and C.~Chae,
	``{ISI}-mitigating channel codes for molecular communication via diffusion,''
	\emph{IEEE Access}, vol.~8, pp. 24\,588--24\,599, Jan. 2020.
	
	\bibitem{10250855}
	H.~Hyun, C.~Lee, M.~Wen, S.-H. Kim, and C.-B. Chae, ``{ISI}-mitigating
	character encoding for molecular communications via diffusion,'' \emph{IEEE
		Wireless Commun. Lett.}, vol.~13, no.~1, pp. 24--28, Jan. 2024.
	
	\bibitem{9840783}
	T.~Nath, K.~G. Benerjee, and A.~Banerjee, ``On effect of different sequence
	distributions on {ISI} in an {MC}v{D} system,'' in \emph{Proc. IEEE Int.
		Conf. Signal Process. and Commun. (SPCOM)}, Jul. 2022, pp. 1--5.
	
	\bibitem{10207011}
	T.~Nath and A.~Banerjee, ``On novel {ISI}-reducing channel codes for molecular
	communication via diffusion,'' in \emph{Proc. IEEE Int. Symp. Inf. Theory
		(ISIT)}, Jun. 2023, pp. 642--647.
	
	\bibitem{10041114}
	P.~Hofmann, J.~A. Cabrera, R.~Bassoli, M.~Reisslein, and F.~H.~P. Fitzek,
	``Coding in diffusion-based molecular nanonetworks: A comprehensive survey,''
	\emph{IEEE Access}, vol.~11, pp. 16\,411--16\,465, Feb. 2023.
	
	\bibitem{6807659}
	H.~B. Yilmaz, A.~C. Heren, T.~Tugcu, and C.-B. Chae, ``Three-dimensional
	channel characteristics for molecular communications with an absorbing
	receiver,'' \emph{IEEE Commun. Lett.}, vol.~18, no.~6, pp. 929--932, Jun.
	2014.
	
	\bibitem{chen2020novo}
	Z.~Chen \emph{et~al.}, ``De novo design of protein logic gates,''
	\emph{Science}, vol. 368, no. 6486, pp. 78--84, Apr. 2020.
	
\end{thebibliography}

\end{document}